\documentclass[journal]{IEEEtran}
\usepackage[T1]{fontenc}

\usepackage{amssymb}
\usepackage{amsmath}
\usepackage{cite}
\usepackage{url}
\usepackage{cite,graphicx,amsmath,amssymb}
\usepackage{fancyhdr}
\usepackage{mdwmath}
\usepackage{mdwtab}
\usepackage{caption}
\usepackage{amsthm}
\usepackage{setspace}
\usepackage{hyperref}
\usepackage{algorithm}
\usepackage{algorithmic}
\usepackage{multirow}
\usepackage{makecell}
\usepackage{mathtools}
\usepackage{subcaption}
\usepackage{bm}
\usepackage{tikz}
\usepackage{xurl}
\usepackage{stfloats}
\usepackage{mathrsfs}

\usepackage{booktabs}
\usepackage{multirow}
\usepackage{siunitx}
\usepackage{balance}
\usepackage{bm}

\usepackage{array}
\usepackage{multirow,tabularx}

\hypersetup{colorlinks=true,
	linkcolor=blue,
	citecolor=blue,      
	urlcolor=black,
}


\theoremstyle{definition}

\newtheorem{theorem}{Theorem}

\newtheorem{remark}{Remark}

\graphicspath{{./figure/}}

\begin{document}
	
	\title{C-PASS: Center-Fed Pinching Antenna System}

	\author{Xu Gan,~\IEEEmembership{Member, IEEE}, and Yuanwei Liu,~\IEEEmembership{Fellow,~IEEE} 		\vspace{-3mm}
		\thanks{(Corresponding author: Yuanwei Liu.)}
		\thanks{Xu Gan is with the Department of Electrical and Computer Engineering, The University of Hong Kong, Hong Kong (e-mail: eee.ganxu@hku.hk).}
		\thanks{Yuanwei Liu is with the Department of Electrical and Computer Engineering, The University of Hong Kong, Hong Kong, and also with the Department of Electronic Engineering, Kyung Hee University, Yongin-si, Gyeonggi-do 17104, South Korea (e-mail: yuanwei@hku.hk).}
	}

	\maketitle
	
\begin{abstract}
	A novel architecture of the center-fed pinching antenna system (C-PASS) is proposed. In contrast to the conventional end-fed PASS, signals are fed from the center input ports and propagate towards both sides of the waveguide. By doing so, spatial-multiplexing gain can be achieved in a single waveguide. Based on the proposed C-PASS, closed-form expressions for the degree of freedom (DoF) and power scaling laws are derived. These theoretical results reveal that C-PASS can achieve \emph{twice} the DoF and an additional multiplexing gain of $\mathcal{O}(P_T \ln^4 N/N^2)$ compared to the conventional PASS, where $P_T$ and $2N$ represent the transmit power and the total pinching antenna number, respectively. Numerical results are provided to demonstrate that substantial capacity improvements can be achieved through the enhanced DoF and multiplexing gain of the C-PASS.		
		
	\begin{IEEEkeywords}
		Center-fed architecture, degree of freedom, pinching antenna system, power scaling law.
			
	\end{IEEEkeywords}

\end{abstract}

\section{Introduction}
\IEEEPARstart{R}{ecently}, pinching antenna systems (PASSs) have attracted increasing attention as a promising architecture for replacing part of high-loss wireless propagation with stable, low-loss wired transmission~\cite{liu2025pinching,liu2025pinching2,yang2025pinching}. A PASS consists of dielectric waveguides and separate pinching antennas (PAs), which radiate guided signals into free space through electromagnetic coupling~\cite{wang_model}. By adjusting the PA deployment positions, PASSs can flexibly control the wired and wireless propagation distances, enabling applications such as physical layer security~\cite{shan2025secure}, energy efficiency enhancement~\cite{gan2025dual}, and unmanned aerial vehicle communications~\cite{lv2025pinching}.

Despite these advantages, existing studies have mainly considered end-fed PASSs~\cite{liu2025pinching,liu2025pinching2,yang2025pinching,shan2025secure,gan2025dual,lv2025pinching,ouyang2025array,wang_model}, where a terminal source excites all PAs through a single serial waveguide path. This ``Single-Input Multiple-Radiation'' topology yields a rank-one effective channel, limiting the degrees of freedom (DoF) to $1$ and forcing multi-user transmission or channel estimation to rely on time/frequency division. To overcome this rank deficiency, we propose a center-fed PASS (C-PASS) architecture that can \emph{double} the DoF of conventional PASS. C-PASS preserves the established PA radiation model and only changes the feeding framework from unidirectional serial excitation to bidirectional center-fed excitation. It can be implemented using T-junction waveguides~\cite{hong2018high,reichel2016broadband,li2015broadband}, where a tunable power splitter launches controllable bidirectional propagation from the waveguide center. However, the communication model and performance limits of this center-fed architecture remain unexplored.

The main contributions are summarized as follows: i) We propose the C-PASS architecture, which enables controllable bidirectional signal transmission through tunable power splitters. ii) We derive the DoF of C-PASS and conventional end-fed PASS in \emph{Theorem 1}, showing that C-PASS achieves twice the DoF of conventional PASS. The doubled DoF enables spatial multiplexing in a single waveguide, addressing the rank-deficiency bottleneck of conventional PASS. iii) We analyze the power scaling laws for both architectures in \emph{Theorem 2}, showing that while both achieve an array gain of $\mathcal{O}(\ln^2 N/N)$, C-PASS additionally offers a multiplexing gain of $\mathcal{O}(P_T \ln^4 N/N^2)$, where $2N$ is the number of all PAs and $P_T$ is the transmit power. iv) Numerical results validate the theoretical analyses in \emph{Theorems 1} and \emph{2} and demonstrate the superiority of C-PASS over conventional PASS.

\section{Proposed C-PASS Architecture and\\ Signal Model}\label{sec2}
Fig.~\ref{fig:concept_PASS} illustrates the proposed C-PASS architecture, where a tunable waveguide T-junction feeds the forward and backward PA groups. The input branch is denoted as Port~1, while the two output branches toward the forward and backward directions are denoted as Port~2 and Port~3, respectively. From microwave network theory~\cite{pozar2011microwave}, the T-junction can be modeled as a reciprocal three-port network with scattering matrix $\mathbf{S}\in\mathbb{C}^{3\times3}$, where $|[\mathbf{S}]_{2,1}|^2$ and $|[\mathbf{S}]_{3,1}|^2$ characterize the power delivered from Port~1 to the two output branches. Unlike static T-junctions whose splitting ratios are fixed by geometry, tunable implementations can adjust these transmission coefficients by perturbing the electromagnetic boundary conditions, e.g., via gyromagnetic materials under external magnetic fields~\cite{hong2018high} or mechanically movable septa~\cite{reichel2016broadband}. Mathematically, let $\mathbf{x}_{\text{in}}$ denote the incident signal at the input port. Then, the forward-propagating signal $\mathbf{x}_{\text{in}}^{\text{F}}$ and backward-propagating signal $\mathbf{x}_{\text{in}}^{\text{B}}$ are modeled as:
\begin{equation}
	\mathbf{x}_{\text{in}}^{\text{F}} = \sqrt{\beta_{\text{F}}} \mathbf{x}_{\text{in}}, \quad \mathbf{x}_{\text{in}}^{\text{B}} = \sqrt{\beta_{\text{B}}} \mathbf{x}_{\text{in}},
\end{equation}
where the power splitting ratio $\beta_{\chi}$ for $\chi \in \{\text{F}, \text{B}\}$ is defined by $\beta_{\text{F}} = |[\mathbf{S}]_{2,1}|^2$ and $\beta_{\text{B}} = |[\mathbf{S}]_{3,1}|^2$. Due to passivity of the T-junction,
\begin{equation}\label{sec2_beta}
	\beta_{\text{F}} + \beta_{\text{B}} \le 1.
\end{equation}
After power splitting, $\mathbf{x}_{\text{in}}^{\text{F}}$ and $\mathbf{x}_{\text{in}}^{\text{B}}$ propagate outward from the center input port along the forward and backward waveguide directions, respectively. When the guided waves encounter the PAs, part of the energy is radiated into free space through electromagnetic coupling. For the $n$-th PA in the $\chi$-direction, let $\mathbf{x}_{n}^{\chi, \text{inc}}$, $\mathbf{x}_{n}^{\chi, \text{rad}}$, and $\mathbf{x}_{n}^{\chi, \text{thr}}$ denote the incident, radiated, and through signals, respectively. Assuming negligible radiation loss and phase discontinuity, these signals are modeled as:
\begin{align}
	 \mathbf{x}_{n}^{\chi, \text{rad}} = \sqrt{\delta_n^{\chi}} \mathbf{x}_{n}^{\chi, \text{inc}}, \
	 \mathbf{x}_{n}^{\chi, \text{thr}} = \sqrt{1-\delta_n^{\chi}} \mathbf{x}_{n}^{\chi, \text{inc}},
\end{align}
where $\delta_n^{\chi} \in [0,1]$ denotes the radiation power ratio of the $n$-th PA, which can be adjusted via the coupling length between the PA and the waveguide structure~\cite{wang_model}. Then, the through signal $\mathbf{x}_{n}^{\chi, \text{thr}}$ continues its propagation along the waveguide.  After traveling a distance of $d_n^{\chi}$, the signal arrives at the $(n+1)$-th PA and serves as the incident signal, formulated as
\begin{equation}
	\mathbf{x}_{n+1}^{\chi, \text{inc}} = \exp\left( -j k_g d_n^{\chi} \right) \mathbf{x}_{n}^{\chi, \text{thr}}, 
\end{equation}
where $k_g=\frac{2\pi}{\lambda_g}$ is the propagation wavenumber in the waveguide, and $\lambda_g$ is the effective wavelength in the waveguide medium. The in-waveguide wavelength $\lambda_g$ is related to the free-space wavelength $\lambda_0$ by the effective refractive index $n_{\text{eff}}$ of the waveguide as $\lambda_g=\frac{\lambda_0}{n_{\text{eff}}}$. Based on the derived mathematical expressions, we can formulate a closed-form expression for the signal radiated by the $n$-th PA in the $\chi$-propagation direction as
\begin{equation}\label{sec2_PA}
	\mathbf{x}_{n}^{\chi, \text{rad}} = \sqrt{\beta_{\chi} \xi_n^{\chi}} \exp\left( -j k_g D_n^{\chi} \right) \mathbf{x}_{\text{in}},
\end{equation}
where the term $\xi_n^{\chi} = \delta_n^{\chi} \prod_{m=1}^{n-1} (1 - \delta_m^{\chi})$ represents the cumulative radiation coefficient for the $n$-th PA, and $D_n^{\chi}=\sum_{m=0}^{n-1} d_m^{\chi}$ denotes the total propagation distance from the input port to the $n$-th PA, in the $\chi$-direction.
For a conventional end-fed PASS, the radiated signal of the $n$-th PA is given by $\mathbf{x}_{n}^{\text{E,rad}}=\sqrt{\xi_n}\exp(-j k_g D_n)\mathbf{x}_{\text{in}}$, which is obtained by setting $\beta_{\text{F}}=1$, $\beta_{\text{B}}=0$ or $\beta_{\text{F}}=0$, $\beta_{\text{B}}=1$ in \eqref{sec2_PA}, where $\xi_n=\delta_n\prod_{m=1}^{n-1}(1-\delta_m)$ and $D_n$ denotes the accumulated distance from the input port to the $n$-th PA.

\begin{figure}[t]
	\centering
	\begin{minipage}[c]{0.48\linewidth}
		\centering
		\includegraphics[width=0.94\linewidth]{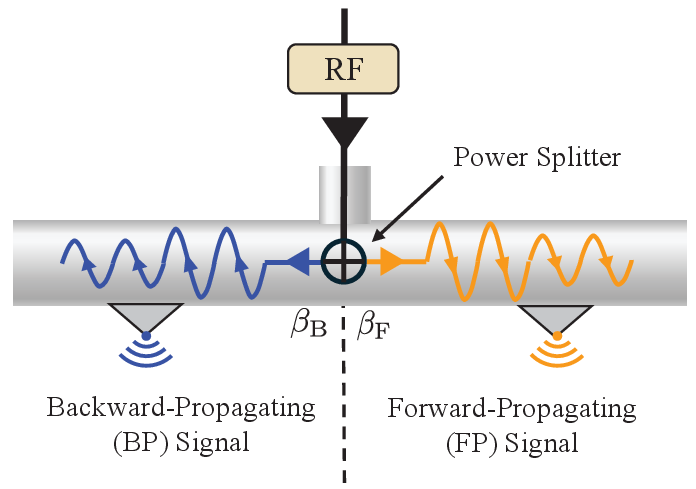}
	\end{minipage}
	\hfill
	\begin{minipage}[c]{0.48\linewidth}
		\centering
		\includegraphics[width=\linewidth]{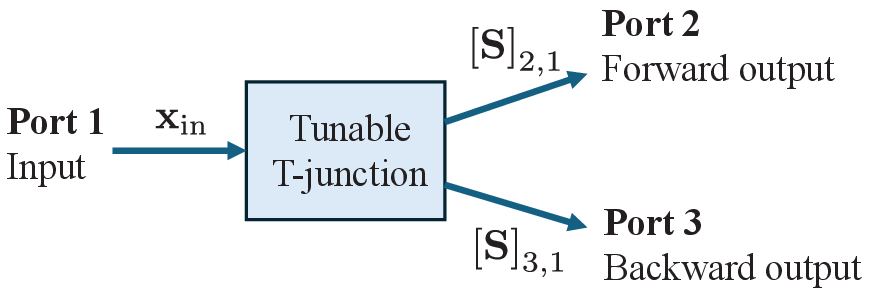}
	\end{minipage}
	\caption{The architecture of center-fed PASS.}
	\label{fig:concept_PASS}
	\vspace{-5mm}
\end{figure}

\begin{remark}
\normalfont The term ``center-fed'' refers to the feeding architecture rather than the exact geometric center of the waveguide. An input port is center-fed when it is placed between two PA groups and excites guided waves toward both sides. In contrast, conventional PASS architectures in \cite{liu2025pinching,liu2025pinching2,yang2025pinching,shan2025secure,gan2025dual,lv2025pinching,ouyang2025array,wang_model} are end-fed, where the input port is located at a waveguide terminal and the guided signal traverses one cascaded PA sequence.
\end{remark}

\section{DoF and Power Scaling Law Analysis}
This section compares the DoF and power scaling laws of C-PASS and conventional end-fed PASS. For a fair comparison, both architectures use two input ports to serve two users with the same total number of $2N$ PAs, as shown in Fig.~\ref{fig:system_comparison}. The configurations are defined as follows:
\begin{itemize}
	\item \textbf{Center-Fed PASS:} The two signals are divided via two power splitters into the forward- and backward-direction of the waveguide. Consequently, these forward- and backward-direction signals are radiated by the $N$ forward-direction PAs (FPAs) and $N$ backward-direction PAs (BPAs), respectively.
	
	\item \textbf{Conventional End-Fed PASS:} The two signals propagate along the same transmission direction from the terminal input ports. Consequently, these signals are both radiated by the cascaded entire $2N$ PAs.
\end{itemize}

\begin{figure}[t!]
	\centering
	\subfloat[C-PASS.]{
		\includegraphics[width=0.8\linewidth]{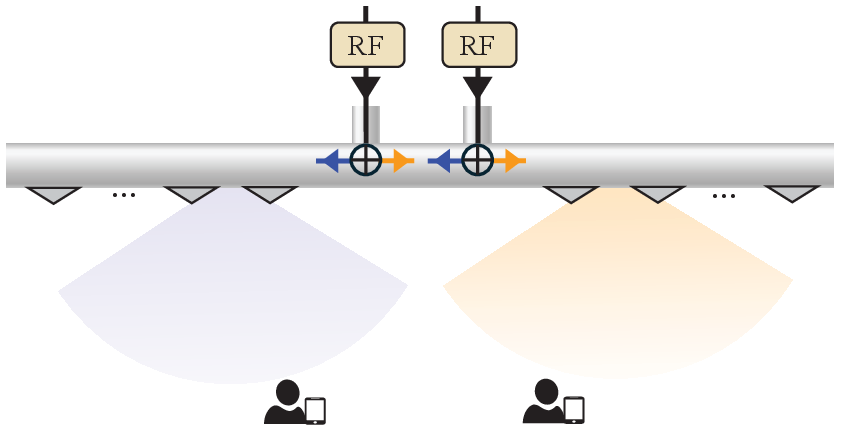}
		\label{fig:system_center}
		\vspace{-1mm}
	}
	\hfill
	\subfloat[Conventional end-fed PASS.]{
		\includegraphics[width=0.8\linewidth]{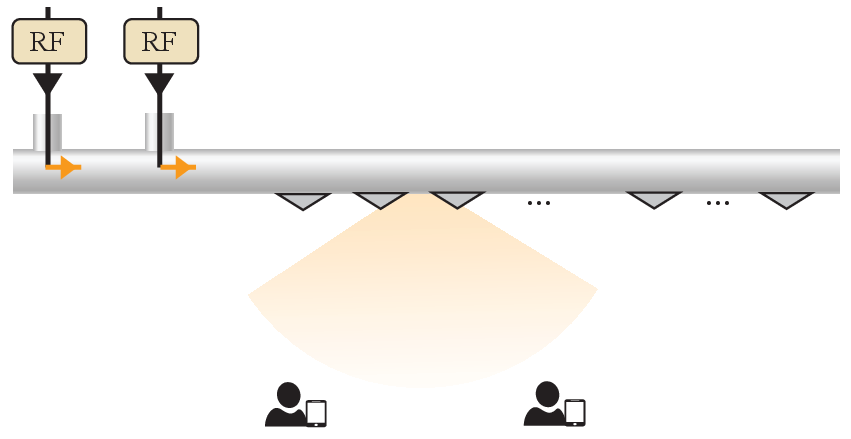}
		\label{fig:system_end}
		\vspace{-1mm}
	}
	\caption{Illustration of C-PASS and conventional end-fed PASS.}
	\label{fig:system_comparison}
	\vspace{-2mm}
\end{figure}

\subsection{Channel Models}
For clarity, the input port and user close to the FPA are denoted as the forward-direction input port (FIN) and forward-direction user (FU), while those close to the BPA are denoted as the backward-direction input port (BIN) and backward-direction user (BU). Let $\beta_{\chi_1 \chi_2}$ denote the power splitting ratio from $\chi_1$-IN to the $\chi_2$-direction, where $\chi_1, \chi_2 \in \{\text{F}, \text{B}\}$. From~\eqref{sec2_beta}, $\beta_{\text{FF}} + \beta_{\text{FB}} \le 1$ and $\beta_{\text{BF}} + \beta_{\text{BB}} \le 1$. The effective in-waveguide channel is then $\mathbf{G}_{\text{C}} =  \Big[\begin{smallmatrix} \sqrt{\beta_{\text{FF}}} \left( \mathbf{g}_{\text{C}}^{\text{FF}} \right)^T & \sqrt{\beta_{\text{FB}}} \left( \mathbf{g}_{\text{C}}^{\text{FB}} \right)^T  \\ \sqrt{\beta_{\text{BF}}} \left( \mathbf{g}_{\text{C}}^{\text{BF}} \right)^T & \sqrt{\beta_{\text{BB}}} \left( \mathbf{g}_{\text{C}}^{\text{BB}} \right)^T  \end{smallmatrix} \Big]$, where $\mathbf{g}_{\text{C}}^{\chi_1 \chi_2}$ is the in-waveguide propagation vector from $\chi_1$-IN to the $\chi_2$-PA. Based on~\eqref{sec2_PA}, its $n$-th element is
\begin{subequations}\label{center_channel_g}
	\begin{align}
		& \left[ \mathbf{g}_{\text{C}}^{\chi \chi} \right]_n = \exp\left( - j k_g n L_{\text{pa}}\right),\\
		&  \left[ \mathbf{g}_{\text{C}}^{\chi \bar{\chi}} \right]_n = \exp\left( - j k_g \left( L_{\text{in}} + n L_{\text{pa}} \right) \right),
	\end{align}
\end{subequations}
where $\bar{\chi}$ represents the complement of $\chi$ in the set $\{\text{F}, \text{B} \}$. Here, $L_{\text{pa}}$ represents the inter-element spacing of the PAs, while $L_{\text{in}}$ denotes the physical separation between the two input ports. Then, we introduce the PA radiation matrix $\mathbf{\Sigma}_{\text{C}} = \text{blkdiag}(\bm{\Sigma}_{\text{C}}^{\text{F}}, \bm{\Sigma}_{\text{C}}^{\text{B}})$, where $\bm{\Sigma}_{\text{C}}^{\text{F}}$ and $\bm{\Sigma}_{\text{C}}^{\text{B}}$ represent the diagonal radiation matrix for FPAs and BPAs, respectively, and their $n$-th diagonal element is $\sqrt{\xi_n^{\chi}}$. For analytical tractability, we adopt a uniform radiation scheme where the radiation coefficient for all PAs is $\xi_n^{\chi} = 1/N$, $\forall n, \chi$. Subsequent to the radiation process, the signals undergo wireless propagation to serve the communication users. Let $\mathbf{H}_{\text{C}} = \big[\begin{smallmatrix}
\mathbf{h}_{\text{C}}^{\text{FF}} &\mathbf{h}_{\text{C}}^{\text{FB}} \\
\mathbf{h}_{\text{C}}^{\text{BF}} & \mathbf{h}_{\text{C}}^{\text{BB}} \end{smallmatrix} \big]$ denote the aggregate channel from the $2N$ PAs to the two users, where $\mathbf{h}_{\text{C}}^{\chi_1 \chi_2}$ represents the channel vector from the $\chi_1$-PA to the $\chi_2$-user. Assuming line-of-sight (LoS) free-space propagation, the $n$-th element of $\mathbf{h}_{\text{C}}^{\chi_1 \chi_2}$ is formulated as
\begin{equation}\label{center_channel_h}
	[\mathbf{h}_{\text{C}}^{\chi_1 \chi_2}]_n = \eta \frac{\exp(-j k_0 r_n^{\chi_1 \chi_2})}{r_n^{\chi_1 \chi_2}},
\end{equation}
where $\eta$ and $k_0=\frac{2\pi}{\lambda_0}$ are the path-loss coefficient and wavenumber of the free-space propagation, respectively. The term $r_n^{\chi_1 \chi_2}$ denotes the Euclidean distance between the $n$-th $\chi_1$-PA and the $\chi_2$-user, with $r_n^{\chi \chi}=\sqrt{Y_{\chi}^2+n^2L_{\text{pa}}^2}$ and $r_n^{\chi \bar{\chi}}=\sqrt{Y_{\bar{\chi}}^2+(L_{\text{in}}+nL_{\text{pa}})^2}$. The two users are assumed to be located at the horizontal positions of the FIN and BIN, and $Y_{\text{F}}$ and $Y_{\text{B}}$ denote their vertical distances from the waveguide.

To model the end-to-end effective channel from the input ports to the users in the C-PASS, we cascade the derived channel models~\eqref{center_channel_g} and \eqref{center_channel_h}. The overall effective channel is given by $\mathbf{H}_{\text{C}}^{\text{eff}} = \mathbf{G}_{\text{C}} \bm{\Sigma}_{\text{C}} \mathbf{H}_{\text{C}}$, where the entry $[\mathbf{H}_{\text{C}}^{\text{eff}}]_{\chi_1 \chi_2}$ denotes the complex channel coefficient from the $\chi_1$-IN to the $\chi$-user, is explicitly formulated as
\begin{subequations}\label{center_channel_effective}
	\begin{align}
		\left[\mathbf{H}_{\text{C}}^{\text{eff}}\right]_{\text{FF}} &= \sqrt{\beta_{\text{FF}}} A_{\text{C}}^{\text{FF}} + \sqrt{\beta_{\text{FB}}} A_{\text{C}}^{\text{BF}} \exp\left( -j k_g  L_{\text{in}} \right), \\
		\left[\mathbf{H}_{\text{C}}^{\text{eff}}\right]_{\text{FB}} &= \sqrt{\beta_{\text{FF}}} A_{\text{C}}^{\text{FB}} + \sqrt{\beta_{\text{FB}}} A_{\text{C}}^{\text{BB}} \exp\left( -j k_g  L_{\text{in}} \right) ,\\
		\left[\mathbf{H}_{\text{C}}^{\text{eff}}\right]_{\text{BF}}  &= \sqrt{\beta_{\text{BF}}} A_{\text{C}}^{\text{FF}} \exp\left( -j k_g  L_{\text{in}} \right) + \sqrt{\beta_{\text{BB}}} A_{\text{C}}^{\text{BF}} ,  \\
		\left[\mathbf{H}_{\text{C}}^{\text{eff}}\right]_{\text{BB}}  &= \sqrt{\beta_{\text{BF}}} A_{\text{C}}^{\text{FB}} \exp\left( -j k_g  L_{\text{in}} \right) + \sqrt{\beta_{\text{BB}}} A_{\text{C}}^{\text{BB}} ,
	\end{align}
\end{subequations}
where
\begin{equation}\label{center_A}
	A_{\text{C}}^{\chi_1 \chi_2} = \frac{\eta}{\sqrt{N}} \sum_{n=1}^N \exp(-j k_g n L_{\text{pa}}) \frac{\exp\left(-j k_0 r_n^{\chi_1 \chi_2}\right)}{r_n^{\chi_1 \chi_2}}.
\end{equation}

For the end-fed PASS channel model, the formulation follows the same structure as the C-PASS channel model, and is thus not repeated in detail here. The primary differences lie in the power splitting coefficients and PA radiation matrices. Due to the unidirectional signal propagation along the waveguide, no power splitting occurs, i.e., $\beta_{\text{FF}}=\beta_{\text{BF}}=1$ or $\beta_{\text{FB}}=\beta_{\text{BB}}=1$ for the channel $\mathbf{G}_{\text{E}}$. Furthermore, since the input signal propagates through all $2N$ PAs, the radiation matrix is given by $\mathbf{\Sigma}_{\text{E}} = \frac{1}{\sqrt{2N}} \mathbf{I}_{2N}$.

\subsection{DoF Analysis}
Assuming equal power allocation across the two input ports, the ergodic capacity of the C-PASS and the end-fed PASS is $C_{\varpi} = \log_2 \det \left( \mathbf{I}_2 + \frac{P_T}{2 N_0} \mathbf{H}_{\varpi}^{\text{eff}} \left(\mathbf{H}_{\varpi}^{\text{eff}}\right)^H  \right)$, where $\varpi \in \{ \text{C}, \text{E} \}$ represents the center-fed and end-fed architectures, $P_T$ is the transmit power, and $N_0$ denotes the additive noise power. The DoF is defined as the high-SNR pre-logarithmic factor, i.e., $\text{DoF}_{\varpi} = \lim_{P_T \to \infty} C_{\varpi}/\log_2(P_T/N_0)$. The following theorem establishes the DoF achievable by the center-fed and end-fed PASS architectures.

\begin{theorem}
	\emph{For a PASS configured with two input ports, the achievable DoF for the C-PASS and the conventional end-fed PASS are $\text{DoF}_{\text{C}} = 2$ and $\text{DoF}_{\text{E}} = 1$, respectively.}
\end{theorem}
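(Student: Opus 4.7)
The plan is to reduce the DoF calculation to the rank of the $2\times 2$ effective channel and then treat the two architectures separately. Since $\mathbf{H}_\varpi^{\text{eff}}$ is deterministic for a fixed deployment, the standard singular-value expansion of the log-det capacity gives
\begin{equation*}
C_\varpi = \mathrm{rank}\!\left(\mathbf{H}_\varpi^{\text{eff}}\right)\log_2\!\left(P_T/N_0\right) + \mathcal{O}(1),\quad P_T\to\infty,
\end{equation*}
so $\mathrm{DoF}_\varpi = \mathrm{rank}(\mathbf{H}_\varpi^{\text{eff}})$. The theorem then reduces to verifying ranks $2$ and $1$ respectively.

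For C-PASS, I would first observe from \eqref{center_channel_effective} that the effective channel admits the clean factorization $\mathbf{H}_{\text{C}}^{\text{eff}} = \mathbf{B}_{\text{C}}\mathbf{A}_{\text{C}}$, where
\begin{equation*}
\mathbf{B}_{\text{C}} = \begin{pmatrix} \sqrt{\beta_{\text{FF}}} & \sqrt{\beta_{\text{FB}}}\,e^{-jk_g L_{\text{in}}}\\ \sqrt{\beta_{\text{BF}}}\,e^{-jk_g L_{\text{in}}} & \sqrt{\beta_{\text{BB}}} \end{pmatrix},\quad \mathbf{A}_{\text{C}} = \begin{pmatrix} A_{\text{C}}^{\text{FF}} & A_{\text{C}}^{\text{FB}}\\ A_{\text{C}}^{\text{BF}} & A_{\text{C}}^{\text{BB}} \end{pmatrix}.
\end{equation*}
Here $\mathbf{B}_{\text{C}}$ encodes only the tunable splitter coefficients and $\mathbf{A}_{\text{C}}$ encodes only the cascade of uniform in-waveguide radiation and free-space propagation. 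Rank $2$ follows once both factors are shown to be non-singular. Non-singularity of $\mathbf{B}_{\text{C}}$ is straightforward: the admissible choice $\beta_{\text{FF}}=\beta_{\text{BB}}=1$, $\beta_{\text{FB}}=\beta_{\text{BF}}=0$ makes it diagonal with unit entries. For $\mathbf{A}_{\text{C}}$, I would invoke a diagonal-dominance argument based on the geometry in \eqref{center_distance_user}: the co-side summands entering $A_{\text{C}}^{\text{FF}}, A_{\text{C}}^{\text{BB}}$ involve near-field distances $\sim Y_{\chi}$, whereas the cross-side summands in $A_{\text{C}}^{\text{FB}}, A_{\text{C}}^{\text{BF}}$ pick up the additional horizontal offset $L_{\text{in}}$. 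Once $L_{\text{in}}$ is comparable to or larger than $Y_{\text{F}}, Y_{\text{B}}$, the diagonal product strictly dominates the off-diagonal product, giving $\det(\mathbf{A}_{\text{C}})\neq 0$.

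For the conventional end-fed PASS, both input ports sit at the same terminal and feed the identical cascaded chain of $2N$ PAs, so the two rows of $\mathbf{G}_{\text{E}}\in\mathbb{C}^{2\times 2N}$ are co-linear: they differ only by the global scalar phase $e^{-jk_g L_{\text{in}}}$ accounting for the fixed inter-port offset. Hence $\mathrm{rank}(\mathbf{G}_{\text{E}})=1$, and because $\mathbf{H}_{\text{E}}^{\text{eff}} = \mathbf{G}_{\text{E}}\mathbf{\Sigma}_{\text{E}}\mathbf{H}_{\text{E}}$, we obtain $\mathrm{rank}(\mathbf{H}_{\text{E}}^{\text{eff}})\le 1$, with equality because the product is non-zero for generic user placements. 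Combined with the capacity expansion above, this yields $\mathrm{DoF}_{\text{E}}=1$ and $\mathrm{DoF}_{\text{C}}=2$.

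The main obstacle is the rigorous justification that $\det(\mathbf{A}_{\text{C}})\neq 0$. The entries are sums of spherical-wave phasors with position-dependent amplitudes, and there is no algebraic reason ruling out accidental phase cancellation at pathological configurations. The dominance argument above is clean in the intended operating regime (each user near its own input port and $L_{\text{in}}$ non-negligible), but a fully general statement would require either an analyticity-plus-generic-position argument in the continuous deployment parameters or a quantitative near-/far-field magnitude bound on the phasor sums. The remaining steps, namely the high-SNR rank identity, the $\mathbf{B}_{\text{C}}\mathbf{A}_{\text{C}}$ factorization, and the co-linearity of $\mathbf{G}_{\text{E}}$, are essentially mechanical.
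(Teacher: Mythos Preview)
Your proposal is correct and follows the same overall skeleton as the paper: reduce $\mathrm{DoF}_\varpi$ to $\mathrm{rank}(\mathbf{H}_\varpi^{\text{eff}})$ via the high-SNR log-det expansion, then study the $2\times 2$ determinant. Your factorization $\mathbf{H}_{\text{C}}^{\text{eff}}=\mathbf{B}_{\text{C}}\mathbf{A}_{\text{C}}$ is exactly the paper's determinant identity~\eqref{center_det}, since the second bracket there equals $\det(\mathbf{B}_{\text{C}})$ and the first equals $\det(\mathbf{A}_{\text{C}})$.

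The differences are in how the two factors are shown non-singular. For $\mathbf{B}_{\text{C}}$ you exhibit the specific admissible splitter setting $\beta_{\text{FF}}=\beta_{\text{BB}}=1$, $\beta_{\text{FB}}=\beta_{\text{BF}}=0$, which is a clean achievability argument; the paper instead argues that for \emph{generic} $\beta$ the real term $\sqrt{\beta_{\text{FF}}\beta_{\text{BB}}}$ cannot coincide with the complex term $\sqrt{\beta_{\text{FB}}\beta_{\text{BF}}}\,e^{-2jk_gL_{\text{in}}}$, a statement that is true except on a measure-zero set (e.g.\ when $2k_gL_{\text{in}}\in\pi\mathbb{Z}$). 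For $\mathbf{A}_{\text{C}}$ you use a magnitude-based diagonal-dominance argument tied to the geometry $L_{\text{in}}\gtrsim Y_\chi$, while the paper simply asserts that distinct user positions give linearly independent channel response vectors; both are heuristic in the same place, and you are right to flag this as the genuine analytic gap. Your end-fed argument---observing that the two rows of $\mathbf{G}_{\text{E}}$ are scalar multiples via $e^{-jk_gL_{\text{in}}}$---is more explicit than the paper's, which only states the conclusion $R_{\text{E}}=1$ without detail.
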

\begin{proof}
	Direct evaluation of this limit is analytically intractable. To address this, we exploit the fundamental equivalence between the DoF and the effective channel matrix rank, denoted as $R_{\varpi} = \text{rank}(\mathbf{H}_{\varpi}^{\text{eff}})$. This equivalence can be justified through the singular value decomposition of $\mathbf{H}_{\varpi}^{\text{eff}}$ with non-zero singular values $\{\sigma_i\}_{i=1}^{R_{\varpi}}$. Then, the capacity in asymptotic regime can be expanded as $C_{\varpi} =  \sum_{i=1}^{R_{\varpi}} \log_2 \left( 1 + P_T/(2 N_0) \sigma_i^2 \right) \overset{P_T \to \infty}{=} R_{\varpi} \log_2 \left( P_T/N_0 \right) + O(1)$. Substituting this expansion into the definition of DoF yields $\text{DoF}_{\varpi} = R_{\varpi}$. Then, we analyze the determinant value of the two architectures to derive $R_{\varpi}$. By substituting the channel coefficients~\eqref{center_channel_effective}, the determinant value of $\mathbf{H}_{\text{C}}^{\text{eff}}$ is
		\begin{equation}\label{center_det}
			\begin{aligned}
				\det \left( \mathbf{H}_{\text{C}}^{\text{eff}} \right) &= \left( A_{\text{C}}^{\text{FF}} A_{\text{C}}^{\text{BB}}\! - \! A_{\text{C}}^{\text{FB}} A_{\text{C}}^{\text{BF}} \right) \\
				&\times  \left[\sqrt{\beta_{\text{FF}} \beta_{\text{BB}}}\! -\! \sqrt{\beta_{\text{FB}} \beta_{\text{BF}}} \exp\left( -2 j k_g  L_{\text{in}}\right) \right].
			\end{aligned}
		\end{equation}
		The determinant in \eqref{center_det} consists of a geometry factor and a feeding factor. The geometry factor $A_{\text{C}}^{\text{FF}} A_{\text{C}}^{\text{BB}}-A_{\text{C}}^{\text{FB}} A_{\text{C}}^{\text{BF}}$ becomes zero only when the two aggregate PA-user response vectors are exactly proportional, which requires a degenerate user-PA geometry with identical distance-dependent amplitude and phase ratios for the two PA groups. The feeding factor $\sqrt{\beta_{\text{FF}}\beta_{\text{BB}}}-\sqrt{\beta_{\text{FB}}\beta_{\text{BF}}}\exp(-2jk_gL_{\text{in}})$ becomes zero only when $\beta_{\text{FF}}\beta_{\text{BB}}=\beta_{\text{FB}}\beta_{\text{BF}}$ and $L_{\text{in}}=q\lambda_g/2$, $q\in\mathbb{Z}$, hold simultaneously. Hence, under non-degenerate user-PA geometry and feeding configurations, $\det(\mathbf{H}_{\text{C}}^{\text{eff}})\neq0$, so $R_{\text{C}}=2$ and $\mathrm{DoF}_{\text{C}}=2$. In contrast, the end-fed PASS always radiates through one serial PA response, yielding a rank-one effective channel with $R_{\text{E}}=1$ and $\mathrm{DoF}_{\text{E}}=1$. This rank-one property also holds for an end-fed PASS with multiple input ports, since all input signals still share the same serial PA response.
\end{proof}

\begin{remark}
\normalfont The DoF result above is not restricted to the symmetric LoS deployment. Its key requirement is a non-zero C-PASS determinant, rather than symmetric user locations. For asymmetric deployments, the PA-user distances and aggregate responses are replaced by their geometry-dependent counterparts, while the determinant keeps the same factorized structure. Thus, under non-degenerate geometry and feeding configurations, the effective channel remains full-rank. For non-line-of-sight (NLoS)/Rician propagation, the channel can be viewed as the LoS response plus a zero-mean scattered component, so exact rank loss occurs only on a measure-zero set of channel realizations. Hence, asymmetric or NLoS/Rician channels may change finite-SNR gains, but they do not remove the structural DoF advantage of C-PASS.
\end{remark}

\subsection{Power Scaling Law Analysis}
With the obtained DoF characterizing the high-SNR capacity slope of the C-PASS, we next derive specific power scaling laws to explicitly evaluate the achievable capacity improvement. To facilitate the derivation of power scaling law, the capacity expression is reformulated as 
\begin{equation}
	C_{\varpi} \!=\! \log_2\! \left(\! 1 \!+\! \frac{P_T}{2 N_0} \left\| \mathbf{H}_{\varpi}^{\text{eff}} \right\|_F^2\! +\! \left( \frac{P_T}{2 N_0} \right)^2 \! \left| \det(\mathbf{H}_{\varpi}^{\text{eff}}) \right|^2\! \right)\!\!,
\end{equation}
by invoking the equation $\det(\mathbf{I}_2 + \mathbf{X}) = 1 + \text{tr}(\mathbf{X}) + \det(\mathbf{X})$ for any matrix $\mathbf{X} \in \mathbb{C}^{2 \times 2}$. This expansion allows us to decompose the effective channel gain $G_{\varpi}$ into
\begin{equation}\vspace{-1mm}
	G_{\varpi} =  \underbrace{\left\| \mathbf{H}_{\varpi}^{\text{eff}} \right\|_F^2}_{G_{\varpi}^{\text{A}}: \ \text{Array Gain}} + \underbrace{\frac{P_T}{2 N_0} \left| \det(\mathbf{H}_{\varpi}^{\text{eff}}) \right|^2}_{G_{\varpi}^{\text{M}}: \ \text{Multiplexing Gain}}.
\end{equation} 

We proceed to analyze the power scaling laws for the C-PASS. For tractability, we assume the power splitting ratios are equal, i.e., $\beta_{\chi_1 \chi_2}=\frac{1}{2}$. By incorporating this setting into the formulations derived in~\eqref{center_channel_effective} and \eqref{center_det}, the array gain and multiplexing gain are expressed as $G_{\text{C}}^{\text{A}}=|A_{\text{C}}^{\text{FF}}|^2+|A_{\text{C}}^{\text{FB}}|^2+|A_{\text{C}}^{\text{BF}}|^2+|A_{\text{C}}^{\text{BB}}|^2+2\Re\{A_{\text{C}}^{\text{FF}}(A_{\text{C}}^{\text{BF}})^*+A_{\text{C}}^{\text{FB}}(A_{\text{C}}^{\text{BB}})^*\}\cos(k_gL_{\text{in}})$ and $G_{\text{C}}^{\text{M}}=\frac{P_T}{4N_0}(1-\cos(2k_gL_{\text{in}}))|A_{\text{C}}^{\text{FF}}A_{\text{C}}^{\text{BB}}-A_{\text{C}}^{\text{FB}}A_{\text{C}}^{\text{BF}}|^2$, respectively.
Both gains depend on the input-port separation $L_{\text{in}}$ through the phase terms. Choosing $L_{\text{in}} = \frac{\lambda_g}{4} (1+2k), k \in \mathbb{Z}$, maximizes the multiplexing capability while maintaining a robust array gain. Under this setting, the gains simplify to
\begin{subequations}
	\begin{align}
		& G_{\text{C}}^{\text{A}} = |A_{\text{C}}^{\text{FF}}|^2 + |A_{\text{C}}^{\text{FB}}|^2 + |A_{\text{C}}^{\text{BF}}|^2 + |A_{\text{C}}^{\text{BB}}|^2, \\
		& G_{\text{C}}^{\text{M}} = \frac{P_T}{2N_0} \left| A_{\text{C}}^{\text{FF}} A_{\text{C}}^{\text{BB}} - A_{\text{C}}^{\text{FB}} A_{\text{C}}^{\text{BF}} \right|^2.
	\end{align} 
\end{subequations}
The expression of $G_{\text{C}}^{\text{M}}$ shows that the multiplexing term originates from the non-zero determinant enabled by the C-PASS architecture. To obtain a tractable scaling law, we use the fine-tuning position strategy in~\cite{gan2025dual}. Specifically, the FPAs are phase-aligned at the FU and the BPAs at the BU. Under this phase alignment, these direct-link channel gains become dominant:
\begin{equation}\label{hat_A_chi}
	|\hat{A}_{\text{C}}^{\chi \chi}| = \frac{\eta}{\sqrt{N}} \sum_{n=1}^N \frac{1}{\sqrt{Y_{\chi}^2 + n^2 L_{\text{pa}}^2 }}.
\end{equation}
In contrast, the cross-link gain $|\hat{A}_{\text{C}}^{\chi \bar{\chi}}|$ combines $N$ paths with random phases, rendering it negligible compared to the dominant term $|\hat{A}_{\text{C}}^{\chi \chi}|$. The distinct dominance of $|\hat{A}_{\text{C}}^{\chi \chi}|$ over $|\hat{A}_{\text{C}}^{\chi \bar{\chi}}|$, i.e., $|\hat{A}_{\text{C}}^{\chi \chi}| \gg |\hat{A}_{\text{C}}^{\chi \bar{\chi}}|$, becomes increasingly pronounced as $N\! \to\! \infty$. Leveraging this asymptotic dominance, we focus on these dominant terms to establish the power scaling laws in the following theorem.
\begin{theorem}
	\emph{Under the configurations $L_{\text{in}} = \frac{\lambda_g}{4} (1+2k), k \in \mathbb{Z}$ and the asymptotic regime of large $N$, the C-PASS achieves the array gain and multiplexing gain scaling as $\mathcal{O}\left( \frac{\ln^2 N}{N} \right)$ and $\mathcal{O}\left( P_T \frac{\ln^4 N}{N^2} \right)$, respectively. In contrast, the end-fed PASS exhibits an array gain scaling of $\mathcal{O}\left( \frac{\ln^2 N}{N} \right)$, while yielding zero multiplexing gain.}
\end{theorem}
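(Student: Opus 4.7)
The plan is to reduce the scaling question for both architectures to the asymptotic behavior of the single dominant quantity $|\bar{A}_{\text{C}}^{\chi\chi}|$ defined in \eqref{bar_A_chi}, and then propagate those asymptotics through the expressions for array and multiplexing gains.

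First, I would establish the scaling $|\bar{A}_{\text{C}}^{\chi\chi}| = \Theta(\ln N / \sqrt{N})$. The inner sum $S_N \triangleq \sum_{n=1}^N (Y_\chi^2 + n^2 L_{\text{pa}}^2)^{-1/2}$ is monotone, so I would sandwich it between $\int_1^{N+1}(Y_\chi^2 + x^2 L_{\text{pa}}^2)^{-1/2}dx$ and $\int_0^{N}(Y_\chi^2 + x^2 L_{\text{pa}}^2)^{-1/2}dx$. Each integral evaluates in closed form to $L_{\text{pa}}^{-1}\,\mathrm{arcsinh}(x L_{\text{pa}}/Y_\chi)$, which grows like $L_{\text{pa}}^{-1}\ln(2 N L_{\text{pa}}/Y_\chi)$ for large $N$. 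Consequently $S_N = \Theta(\ln N)$, giving $|\bar{A}_{\text{C}}^{\chi\chi}|^2 = \Theta(\ln^2 N/N)$.

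Next, I would argue that the cross-link term $|\bar{A}_{\text{C}}^{\chi\bar{\chi}}|$ is of strictly smaller order. Since the FPAs' positions are tuned to align phases for the FU and not the BU, the summand in $\bar{A}_{\text{C}}^{\chi\bar{\chi}}$ is a complex sequence whose phase $k_g n L_{\text{pa}} + k_0 r_n^{\chi\bar{\chi}}$ behaves essentially as a pseudo-random sequence modulo $2\pi$. Modeling this by the standard incoherent-sum heuristic (or a van der Corput / stationary-phase argument on the oscillatory sum), the magnitude is $\mathcal{O}(\sqrt{N}\cdot 1/(\text{min distance})) \cdot 1/\sqrt{N}$, bounded and at most $\mathcal{O}(1/\sqrt{N})\cdot\mathcal{O}(\sqrt{N})$, hence of order at most $\mathcal{O}(1/\sqrt{N})$ or $\mathcal{O}(\ln N/N)$ after the $1/\sqrt{N}$ normalization — in any case strictly $o(|\bar{A}_{\text{C}}^{\chi\chi}|)$. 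Plugging these orders into \eqref{gain_array_multiplexing} with the choice $L_{\text{in}} = \frac{\lambda_g}{4}(1+2k)$ (so that $\cos(k_g L_{\text{in}})=0$ and $1-\cos(2 k_g L_{\text{in}})=2$) yields $G_{\text{C}}^{\text{A}} = \Theta(\ln^2 N / N)$ and $G_{\text{C}}^{\text{M}} = \frac{P_T}{2N_0}|\bar{A}_{\text{C}}^{\text{FF}}\bar{A}_{\text{C}}^{\text{BB}}|^2 (1+o(1)) = \Theta(P_T \ln^4 N/N^2)$.

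For the end-fed PASS, I would invoke Theorem~1: $\det(\mathbf{H}_{\text{E}}^{\text{eff}})=0$, so $G_{\text{E}}^{\text{M}}\equiv 0$. For the array gain, the effective channel from each input port to a user coincides with the single quantity $A_{\text{E}}^{\chi}=\frac{\eta}{\sqrt{2N}}\sum_{n=1}^{2N}\exp(-jk_g n L_{\text{pa}})\exp(-jk_0 r_n^{\chi})/r_n^{\chi}$. After the same phase-alignment at the intended user, repeating the integral sandwich with $2N$ terms and normalization $1/\sqrt{2N}$ gives $|A_{\text{E}}^{\chi}|^2 = \Theta(\ln^2(2N)/(2N)) = \Theta(\ln^2 N/N)$, so $\|\mathbf{H}_{\text{E}}^{\text{eff}}\|_F^2 = \Theta(\ln^2 N/N)$ as claimed.

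The main obstacle I anticipate is the second step: making the bound on $|\bar{A}_{\text{C}}^{\chi\bar{\chi}}|$ rigorous. Replacing the heuristic ``$N$ random phases sum to $\sqrt{N}$'' with a deterministic bound requires either a Weyl-type equidistribution estimate for the phase sequence $\{k_g n L_{\text{pa}} + k_0 r_n^{\chi\bar{\chi}}\}$ or a second-order stationary-phase analysis of the oscillatory sum over $n$. A conservative route, sufficient for the theorem's $\mathcal{O}$-statements, is to upper-bound $|\bar{A}_{\text{C}}^{\chi\bar{\chi}}|$ trivially by $\frac{\eta}{\sqrt{N}}\sum_n 1/r_n^{\chi\bar{\chi}} = \mathcal{O}(\ln N/\sqrt{N})$ and then show that the product structure $\bar{A}_{\text{C}}^{\text{FF}}\bar{A}_{\text{C}}^{\text{BB}}-\bar{A}_{\text{C}}^{\text{FB}}\bar{A}_{\text{C}}^{\text{BF}}$ is dominated by the coherent term because the cross term carries an extra uncancelled phase factor, which is enough to preserve the asymptotic rate but not the leading constant.
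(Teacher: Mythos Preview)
Your approach is essentially the paper's: the same integral sandwich on $S_N=\sum_{n=1}^N (Y_\chi^2+n^2L_{\text{pa}}^2)^{-1/2}$ (the paper writes the antiderivative as $\ln(x+\sqrt{Y^2+x^2})$ rather than $\mathrm{arcsinh}$, but these coincide), the same use of $L_{\text{in}}=\frac{\lambda_g}{4}(1+2k)$ to kill the cosine cross-terms, and the same appeal to $\det(\mathbf{H}_{\text{E}}^{\text{eff}})=0$ for the end-fed multiplexing gain.

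Two minor deviations are worth flagging. First, the obstacle you identify---rigorously bounding $|\bar{A}_{\text{C}}^{\chi\bar\chi}|$---is not resolved in the paper either; the paper simply adopts the dominance $|\bar{A}_{\text{C}}^{\chi\chi}|\gg|\bar{A}_{\text{C}}^{\chi\bar\chi}|$ as a standing asymptotic approximation (justified heuristically by the random-phase argument in the text preceding the theorem) and works directly with the reduced expressions $\bar{G}_{\text{C}}^{\text{A}}=|A_{\text{C}}^{\text{FF}}|^2+|A_{\text{C}}^{\text{BB}}|^2$ and $\bar{G}_{\text{C}}^{\text{M}}=\frac{P_T}{2N_0}|A_{\text{C}}^{\text{FF}}A_{\text{C}}^{\text{BB}}|^2$, so your Weyl/van der Corput machinery would go beyond what the paper attempts. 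Second, for the end-fed array gain the paper does not phase-align all $2N$ PAs to one user as you propose; it splits the alignment, tuning the first $N$ PAs for the FU and the last $N$ for the BU, which still yields the same $\mathcal{O}(\ln^2 N/N)$ order but mirrors the C-PASS configuration more symmetrically.
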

\begin{proof}
	Under the specific input port spacing and phase-aligned configuration in the asymptotic regime $N \to \infty$, the dominant array and multiplexing gains are given by $\bar{G}_{\text{C}}^{\text{A}}=|\hat{A}_{\text{C}}^{\text{FF}}|^2+|\hat{A}_{\text{C}}^{\text{BB}}|^2$ and $\bar{G}_{\text{C}}^{\text{M}}=\frac{P_T}{2N_0}|\hat{A}_{\text{C}}^{\text{FF}}\hat{A}_{\text{C}}^{\text{BB}}|^2$.
	Accordingly, we analyze the summation component in Eq.~\eqref{hat_A_chi} as $S_N = \sum_{n=1}^N f^{\chi \chi}(n)$, where the kernel function is defined as $f^{\chi \chi}(x) = ( Y_{\chi}^2 + x^2 L_{\text{pa}}^2 )^{-\frac{1}{2}}$. Since $f^{\chi \chi}(x)$ is a positive, monotonically decreasing function for $x > 0$, we invoke the integral inequalities to bound the summation by
	\begin{equation}\label{inequalities}
		\int_{1}^{N+1} f^{\chi \chi}(x) \mathrm{d} x \le \sum_{n=1}^N f^{\chi \chi}(n) \le \int_{0}^{N} f^{\chi \chi}(x) \mathrm{d} x.
	\end{equation}
	Utilizing the standard integration result: $\int (Y^2+x^2)^{-1/2} \mathrm{d}x = \ln( x + \sqrt{Y^2+x^2} ) + C$, the lower and upper bounds in \eqref{inequalities} are derived as
	\begin{equation}
	\small
	\begin{aligned}
		\mathcal{L}_{\text{C}}
		&=\frac{1}{L_{\text{pa}}}
		\ln\frac{(N+1)L_{\text{pa}}+\sqrt{Y_{\chi}^2+(N+1)^2L_{\text{pa}}^2}}
		{L_{\text{pa}}+\sqrt{Y_{\chi}^2+L_{\text{pa}}^2}},\\
		\mathcal{U}_{\text{C}}
		&=\frac{1}{L_{\text{pa}}}
		\ln\frac{NL_{\text{pa}}+\sqrt{Y_{\chi}^2+N^2L_{\text{pa}}^2}}
		{Y_{\chi}} .
	\end{aligned}
	\end{equation}
	It can be examined that both bounds converge to the same scaling order: $\mathcal{O}(\ln N)$ as $N \! \to \! \infty$. By the Squeeze Theorem, the summation $S_N$ scales as $\mathcal{O}(\ln N)$. Substituting this result back into \eqref{hat_A_chi} and the above dominant-gain expressions, we obtain the scaling law of the array gain and multiplexing gain of the C-PASS as:
	\begin{equation}
		\bar{G}_{\text{C}}^{\text{A}} \sim \mathcal{O}\left(\frac{\ln^2 N}{N}\right), \quad \bar{G}_{\text{C}}^{\text{M}} \sim \mathcal{O}\left(P_T \frac{\ln^4 N}{N^2}\right).
	\end{equation}
	
	A similar derivation applies to the end-fed PASS. With the same PA fine-tuning strategy, the first $N$ PAs are phase-aligned at the FU and the last $N$ PAs at the BU, yielding the same array-gain order $\mathcal{O}(\ln^2 N/N)$. However, due to unidirectional propagation, $\det(\mathbf{H}_{\text{E}}^{\text{eff}})=0$, so the multiplexing gain is zero. This is consistent with the scaling result in~\cite{ouyang2025array}.	
\end{proof}

\begin{table}[t]
	\centering
	\resizebox{\columnwidth}{!}{
	\begin{tabular}{|l|c|c|c|} 
		\hline 
		\textbf{Architecture} & \textbf{DoF} & \textbf{Array Gain} & \textbf{Multiplexing Gain} \\
		\hline 
		Center-Fed PASS & 2 & $\mathcal{O}\left( \frac{\ln^2 N}{N} \right)$ & $\mathcal{O}\left( P_T \frac{\ln^4 N}{N^2} \right)$ \\
		\hline 
		End-Fed PASS & 1 & $\mathcal{O}\left( \frac{\ln^2 N}{N} \right)$ & 0 \\
		\hline 
	\end{tabular}
}
	\caption{Comparison between center-fed and end-fed PASS.}
	\label{tab:analysis}
	\vspace{-5mm}
\end{table}

Table~\ref{tab:analysis} summarizes Theorems~1 and~2, showing that C-PASS preserves the array-gain order of the end-fed PASS while additionally providing doubled DoF and a non-zero multiplexing-gain term.

\begin{remark}
\normalfont Practical T-junction non-idealities, such as insertion loss, finite tuning resolution, and residual phase mismatch, can be modeled as perturbations of the branch feeding coefficients. These perturbations mainly affect the finite-SNR array and multiplexing gain magnitudes. Since the nominal C-PASS channel is full-rank under non-degenerate geometry and feeding configurations, bounded hardware perturbations do not remove the structural DoF advantage unless they induce exact determinant cancellation. Practical C-PASS design should therefore avoid near-zero determinant operating points. A detailed hardware-level analysis and robust parameter optimization under T-junction impairments are left for future work.
\end{remark}

\section{Numerical Results}
In this section, we provide numerical results to validate the analytical derivations for the DoF and power scaling laws of the C-PASS and the conventional end-fed PASS. They also demonstrate the significant capacity enhancement of the C-PASS over the conventional architecture, confirming the effectiveness of the proposed C-PASS. Unless otherwise specified, the simulations operate at a carrier frequency of $f_c = 28$ GHz with a waveguide refractive index of $n_{\text{eff}}=1.4$. The two users are located at vertical distances of $Y_{\text{F}}=35$ m and $Y_{\text{B}}=40$ m. We employ the PA spacing of $L_{\text{pa}}=1$ m and the input port separation of $L_{\text{in}} = 1.25\lambda_g$. To ensure hardware feasibility, position tuning is strictly constrained to $|\Delta_n| \le 0.01$ m. The transmit power is set to $P_T = 30$ dBm, and the noise power of $N_0 = -80$ dBm.

\begin{figure*}[t]
	\centering
	\subfloat[DoF comparison.]{
		\includegraphics[width=0.32\linewidth]{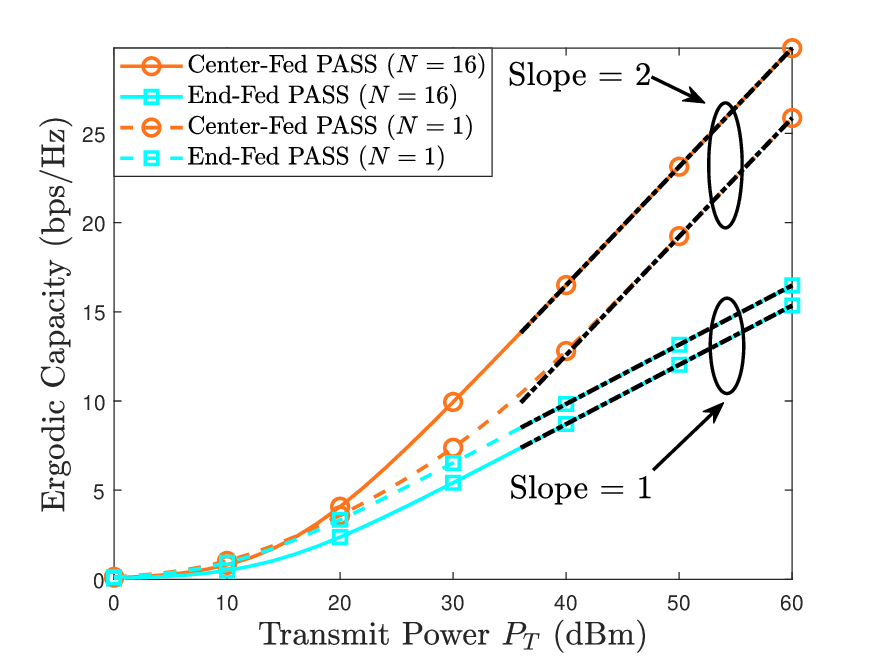}
		\label{fig:DoF}
	}
	\subfloat[Effective gain of C-PASS.]{
		\includegraphics[width=0.32\linewidth]{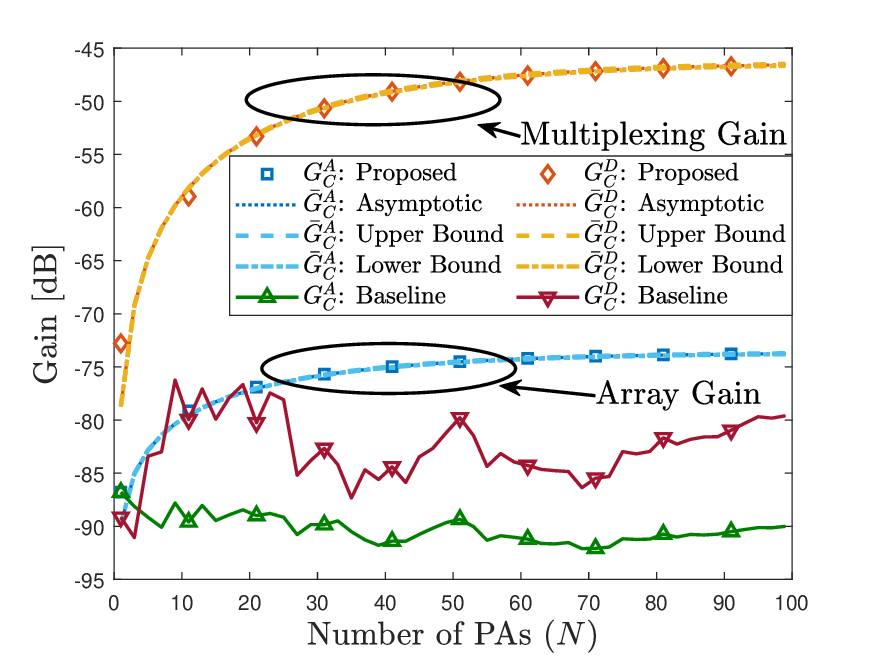}
		\label{fig:center_gain}
	}
	\subfloat[Capacity enhancement of C-PASS.]{
		\includegraphics[width=0.32\linewidth]{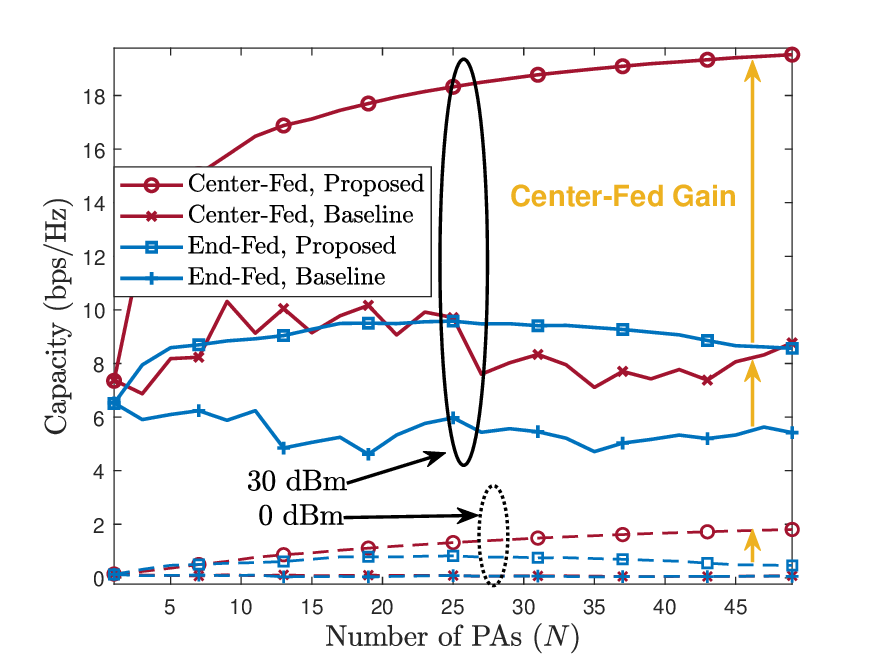}
		\label{fig:capacity_center_end}
	}
	\caption{Numerical results of the proposed C-PASS.}
	\label{fig:simulation_all}
	\vspace{-2mm}
\end{figure*}

Fig.~\ref{fig:DoF} plots the ergodic capacity versus transmit power to illustrate the DoF. The dashed lines are reference curves with slopes of 1 and 2. In the high-SNR regime, the end-fed and center-fed architectures closely follow the slope-1 and slope-2 references, respectively, validating \emph{Theorem 1}. The capacity advantage of C-PASS becomes more pronounced at high $P_T$. For instance, at $P_T = 60$ dBm, C-PASS achieves nearly \emph{double} the capacity of its end-fed counterpart. As $N$ increases from $1$ to $16$, C-PASS continues to improve, whereas the end-fed PASS degrades because its serial extension increases the average PA-to-user distance. The center-fed topology mitigates this path-loss increase by distributing the PA groups around the input ports.

Fig.~\ref{fig:center_gain} plots the array and multiplexing gains of C-PASS versus the number of PAs, comparing the proposed position-tuning scheme with the uniform deployment baseline. Here, ``Proposed'' and ``Baseline'' denote fine-tuned PA deployment and uniform PA deployment without position tuning, respectively. Under uniform deployment, both $G_{\text{C}}^{\text{A}}$ and $G_{\text{C}}^{\text{M}}$ fluctuate irregularly because the deterministic PA-dependent LoS phases are not aligned as $N$ varies. We therefore focus on the proposed position-tuning scheme to validate \emph{Theorem 2}. The simulated gains closely match the asymptotic analyses and remain within the theoretical upper and lower bounds. For $N>5$, this agreement validates the scaling orders $\mathcal{O}(\ln^2 N/N)$ and $\mathcal{O}(P_T\ln^4 N/N^2)$.

Fig.~\ref{fig:capacity_center_end} plots the capacity versus the number of PAs for center-fed and end-fed PASS under two PA deployment schemes at $P_T=0$ dBm and $30$ dBm. Here, ``center-fed gain'' denotes the C-PASS advantage over the end-fed PASS due to the combined array and multiplexing gains. C-PASS consistently outperforms the end-fed PASS across the considered deployment schemes, owing to the additional multiplexing gain enabled by the center-fed architecture. As shown in Table~\ref{tab:analysis}, under PA fine-tuning, C-PASS achieves a receive-strength improvement of order $\mathcal{O}(P_T\ln^4 N/N^2)$. This trend agrees with Fig.~\ref{fig:capacity_center_end}, where the C-PASS capacity increases with both $P_T$ and $N$. For instance, at $P_T=30$ dBm and $N=50$, C-PASS provides a $3.59$ dB capacity improvement over the end-fed architecture.

\section{Conclusion}
In this letter, a novel architecture of C-PASS has been proposed. A basic signal model characterizing the bidirectional in-waveguide propagation was presented. To evaluate the performance of the C-PASS, closed-form expressions for the DoF and power scaling laws were derived and compared with those of the conventional end-fed PASS. Numerical results validated the analytical derivations, confirming the effectiveness of C-PASS in significantly enhancing communication performance. These results motivate future research on C-PASS-enabled wireless networks, which are envisioned to deliver superior performance by exploiting the doubled DoF and additional multiplexing gain.

\bibliographystyle{IEEEtran}
\bibliography{reference/mybib}

\end{document}